\newtheorem{Theorem}{Theorem}          
\newtheorem{Lemma}[Theorem]{Lemma}
\newtheorem{Definition}{Definition}    
\def\Proof{\par \noindent{\bf Proof: }}
\def\Done{\hfill\rule{0.5em}{0.5em}}
\newenvironment{proof}{\Proof}{\Done}
\def\NN{\mathbb{N}}
\def\Lolly{\Rightarrow}
\def\iAnd{\otimes} 
\def\ASM{[\Func{ASM}]}
\def\CUT{[\Func{CUT}]}
\newcommand{\systemB}{{\cal B}}
\newcommand{\LambdaP}{\Lambda^{\otimes}}
\newcommand{\betaConv}{\textbf{$\beta$-conv}}
\newcommand{\letConv}{\textbf{l-conv}}
\newcommand{\breakConv}{\textbf{b-conv}}
\newcommand{\apLetConv}{\textbf{ap-l-conv}}
\newcommand{\apBreakConv}{\textbf{ap-b-conv}}
\newcommand{\letLetConv}{\textbf{l-l-conv}}
\newcommand{\letBreakConv}{\textbf{l-b-conv}}
\newcommand{\breakLetConv}{\textbf{b-l-conv}}
\def\LLeft{[{\Lolly}\Func{L}]}
\def\LRight{[{\Lolly}\Func{R}]}
\def\CLeft{[{\iAnd}\Func{L}]}
\def\CRight{[{\iAnd}\Func{R}]}
\def\BRK{[\mbox{\Func{BRK}}]}
\def\CE{[{\iAnd}\Func{E}]}
\def\CI{[{\iAnd}\Func{I}]}
\def\LE{[{\Lolly}\Func{E}]}
\def\LI{[{\Lolly}\Func{I}]}
\def\LLm{\Logic{{\L}L}{m}}
\def\FV{\textbf{FV}}
\def\Logic#1#2{\mbox{{\bf #1}}_{\mbox{\bf #2}}}
\newcommand{\ALm}{\Logic{AL}{m}}
\newcommand{\LLi}{\Logic{{\L}L}{i}}
\newcommand{\LLc}{\Logic{{\L}L}{c}}
\def\Pair#1#2{{#1} \mathop{\mbox{{\scriptsize $\otimes$}}} {#2}}
\def\Let#1#2#3{\mbox{\textbf{let}}\,#2 \mathrel{\mbox{\textbf{be}}} #1 \mathrel{\mbox{\textbf{in}}} #3}
\newcommand{\Brk}{\mbox{\textbf{break}}}
\newcommand{\Lets}{\mbox{\textbf{let}}}
\newcommand{\Break}[4]{\Brk_{#1}\:#2\;\mbox{\textbf{as}}\;#3\;\mbox{\textbf{in}}\,#4}
\newcommand{\reduc}{\rightsquigarrow}
\newcommand{\reducstar}{\reduc^{*}}
\newcommand{\reducby}[1]{\stackrel{#1}{\reduc}}
\def\Func#1{{\sf{#1}}}
\def\Diff{\mathop{\backslash}}
\def\Hide#1{\relax}
\title{A Curry-Howard Correspondence for the Minimal Fragment of {\L}ukasiewicz Logic}
\author{Rob Arthan\thanks{
\texttt{rda@lemma-one.com}
} \& Paulo Oliva\thanks{
\texttt{p.oliva@qmul.ac.uk}
}}
\begin{document}

\maketitle

\begin{abstract}
In this paper we introduce a term calculus $\systemB$ which adds to the affine $\lambda$-calculus with pairing a new construct allowing for a restricted form of contraction. We obtain a Curry-Howard correspondence between $\systemB$ and the sub-structural logical system which we call ``minimal {\L}ukasiewicz logic'', also known in the literature as the logic of hoops (a generalisation of MV-algebras). This logic lies strictly in between affine minimal logic and standard minimal logic. We prove that $\systemB$ is strongly normalising and has the Church-Rosser property. We also give examples of terms in $\systemB$ corresponding to some important derivations from our work and the literature. Finally, we discuss the relation between normalisation in $\systemB$ and cut-elimination for a Gentzen-style formulation of minimal {\L}ukasiewicz logic.
 \end{abstract}

\section{Introduction}

We are interested in the proof theory of {\L}ukasiewicz logic and subsystems thereof.
In this context, designing proof systems with nice dynamical properties -- cut-elimination or normalisation -- has proved to be a hard problem. Systems with the cut-elimination property have been successfully obtained via an extension of Gentzen's sequent calculus by means of \emph{hypersequents} \cite{Metcalfe(2005A)}. But this approach depends on the pre-linearity axiom
\[ (A \Lolly B) \vee (B \Lolly A) \]
a principle that is not intuitionistically acceptable. As far as we are aware, in the quite extensive literature on fragments of {\L}ukasiewicz logic that are compatible with intuitionistic or minimal logic, such as the logic of GBL algebras \cite{jipsen-montagna06} and the logic of hoops \cite{blok-ferreirim00,BO}, no normalising or cut-free proof systems are to be found. 

Benton et al. \cite{Benton(1993)} gave a term calculus for intuitionistic linear logic.
In this paper, we follow their approach and extend the simply-typed affine $\lambda$-calculus with a construct that captures propositional \emph{minimal {\L}ukasiewicz logic}. We prove this extended system $\systemB$ preserves types, is strongly normalising and has the Church-Rosser property. We give examples of terms in $\systemB$ corresponding to some important derivations from our work and from the literature on hoops and GBL algebras. 

\subsection{Fragments of {\L}ukasiewciz logic}

The standard Hilbert-style axiomatisation of (classical) {\L}ukasiewciz logic $\LLc$ may be found in \cite{Hajek98}. It has {\it modus ponens} as its only inference rule and its axioms comprise the axioms of \emph{basic logic}:
\[
\begin{array}{ll}
	\textup{(B1)} & (A \Lolly B) \Lolly (B \Lolly C) \Lolly A \Lolly C \\[1mm]
	\textup{(B2)} & A \iAnd B \Lolly A \\[1mm]
	\textup{(B3)} & A \iAnd B \Lolly B \iAnd A \\[1mm]
	\textup{(B4)} & A \iAnd (A \Lolly B) \Lolly B \iAnd (B \Lolly A) \\[1mm]
	\textup{(B5a)} & (A \iAnd B \Lolly C) \Lolly A \Lolly B \Lolly C \\[1mm]
	\textup{(B5b)} & (A \Lolly B \Lolly C) \Lolly A \iAnd B \Lolly C \\[1mm]
	\textup{(B6)} & ((A \Lolly B) \Lolly C) \Lolly ((B \Lolly A) \Lolly C) \Lolly C \\[1mm]
	\textup{(B7)} & \bot \Lolly A
\end{array}
\]
together with the axiom of \emph{double negation elimination}.
\[
\begin{array}{ll}
	\textup{(DNE)} & \neg \neg A \Lolly A \\[1mm]
\end{array}
\]
where $\neg A$ is defined as $A \Lolly \bot$. If from these we drop the axioms that are not valid in minimal logic \cite{Troelstra(96)}, i.e. (B6), (B7) and (DNE), we are left with the fragment (B1)--(B5), which we will call \emph{minimal {\L}ukasiewicz logic} $\LLm$. If we extend $\LLm$ with the ex-falso-quodlibet axiom (B7), we obtain what we have called \emph{intuitionistic {\L}ukasiewicz logic} $\LLi$. The logics $\LLm$ and $\LLi$ can be faithfully characterised algebraically using the classes of algebraic structures originally due to B\"{u}chi and Owen and known as hoops and bounded hoops, respectively (see \cite{arthan-oliva14a,arthan-oliva14b,blok-ferreirim00,BO}). Hoops are reducts of commutative GBL algebras, whose equational theory has been shown to be PSPACE-complete \cite{Bova:2009}.
As the equational theory of GBL algebras is a conservative extension of that of hoops, it follows that the decision problems for $\LLm$ and $\LLi$ are also PSPACE-complete\footnote{We are indebted to the late Franco Montagna who pointed this out to us back in 2014.}.
If we omit (B4) from $\LLm$, we obtain the minimal $(\Lolly, \iAnd)$-fragment of affine logic $\ALm$, which we will just refer to as affine logic in this paper.
For more details about the proof theory of these systems, including double negation translations from $\LLc$ into $\LLi$ and $\LLm$, see \cite{arthan-oliva15a}.

Our goal in this paper is to devise a typed term calculus whose inhabited types comprise the formulas provable in $\LLm$ with $\iAnd$ and $\Lolly$ viewed as the product and function type constructors.

\section{The $\systemB$ Calculus}

Let us start by introducing the term language of the $\systemB$ calculus.
Since we want terms to be unambiguous representations of proofs, we give a Church-style calculus of typed terms, rather
than a Curry-style type-assignment system.

\subsection{Term language}

Types are formed from type variables $P_1, P_2, \ldots$ using the binary operators $\Lolly$ and $\iAnd$. We use $P$ to range over type variables and $A, B, C, D$ will range over arbitrary types. The terms of the $\systemB$ calculus are obtained inductively starting from typed variables ($x^A, y^B, \ldots$) via the following constructs:
\begin{itemize}
	\item $\lambda$-abstraction and term application
	\begin{itemize}
		\item $\lambda x^A.t$ is a term when $t$ is a term
		\item $s\,t$ is a term when $s$ and $t$ are terms
	\end{itemize}
	\item constructor and destructor for pairs
	\begin{itemize}
		\item $\Pair{s}{t}$ is a term when $s$ and $t$ are terms
		\item $\Let{\Pair{x^A}{y^B}}{t}{r}$ is a term when $r$ and $t$ are terms, $x^A$ and $y^B$ are distinct variables
	\end{itemize}
	\item the $\Brk$ constructor:
	\begin{itemize}
		\item $\Break{}{t}{x^A,y^B}{r}$ is a term when $r$ and $t$ are terms, $x^A$ and $y^B$ are distinct variables.
	\end{itemize}
\end{itemize}

Our typing rules will imply that the variables $x^A$ and $y^B$ in $\Break{}{t}{x^A,y^B}{r}$ denote a higher-order function and a function respectively. From now on, we will therefore generally use letters like $\varphi$ and $f$ for these variables instead of $x$ and $y$. This is illustrated in the following definition of the set $\FV(t)$ of free variables of a term $t$:
\begin{align*}
\FV(x^A) &= \{x^A\} \\
\FV(\lambda x^A . t) &= \FV(t) \Diff \{x^A\} \\
\FV(s\,t) &= \FV(s) \cup \FV(t) \\
\FV(\Pair{s}{t}) &= \FV(s) \cup \FV(t) \\
\FV(\Let{\Pair{x^A}{y^B}}{s}{t}) &= (\FV(t) \Diff \{x^A, y^B\}) \cup \FV(s) \\
\FV(\Break{}{s}{\varphi^A, f^B}{t}) &= (\FV(t) \Diff \{\varphi^A, f^B\}) \cup \FV(s)
\end{align*}


\subsection{Type system}
\label{sec:type-system}

\begin{figure*}
\[
\small{
\begin{array}{ccc}
%
%
\begin{prooftree}
\justifies
\Gamma, x : A\vdash x^A : A
\using{\ASM}
\end{prooftree}
& \quad &
\begin{prooftree}
\Gamma \vdash t : A
\qquad
\Delta, \varphi : K_B A, f : S_B A \vdash u : C
\justifies
\Gamma, \Delta \vdash \Break{}{t}{\varphi^{K_B A}, f^{S_B A}}{u} : C
\using{\BRK}
\end{prooftree}
\\[7mm]
%
%
\begin{prooftree}
\Gamma, x : A \vdash t : B
\justifies
\Gamma \vdash \lambda x^A . t : A \Lolly B
\using{\LI}
\end{prooftree}
& &
%
%
\begin{prooftree}
\Gamma \vdash t : A \Lolly B
\quad
\quad
\Delta \vdash u : A
\quad
\justifies
\Gamma, \Delta \vdash t\,u : B
\using{\LE}
\end{prooftree} \\[7mm]
%
%
\begin{prooftree}
\Gamma \vdash s : A
\quad
\Delta \vdash t : B
\justifies
\Gamma, \Delta \vdash \Pair{s}{t} : A \iAnd B
\using{\CI}
\end{prooftree}
& &
%
%
\begin{prooftree}
\Gamma \vdash t : A \iAnd B
\quad
\Delta, x : A, y : B \vdash u : C
\justifies
\Gamma, \Delta \vdash \Let{\Pair{x^A}{y^B}}{t}{u} : C
\using{\CE}
\end{prooftree} 
\end{array}
}
\]
\caption{$\systemB$ typing rules}
\label{fig-type-system}
\end{figure*}

The typing rules for $\systemB$ are given in Figure \ref{fig-type-system}.
In the sequents $\Gamma \vdash t \colon A$ used in the rules,
the context $\Gamma$ is a finite function mapping (untyped) variable names to types, $t$ is a $\systemB$ term and $A$ is a type.
In the rule \BRK, we use the following abbreviations\footnote{It is noteworthy that for each formula $B$ , the mappings $A \mapsto K_B A$ and $A \mapsto S_B A$ can both be equipped with a monad structure in the simply typed $\lambda$-calculus, with $A \mapsto K_B A$ being the well-known continuation monad.}:
\begin{align*}
 K_B A &\equiv (A \Lolly B) \Lolly B \\
   S_B A & \equiv \; A \Lolly B.
\end{align*}

The rules with two premises are subject to the side condition that the two contexts $\Gamma$ and $\Delta$
must be compatible, i.e. $\Gamma, \Delta$ must also be a finite function mapping (untyped) variable names to types. 
This implies that typable terms are {\em affine}
in the sense that in any subterm each free variable appears exactly once.
However, as we will see, with the new rule $\BRK$ we populate many
types that are uninhabited in the affine simply-typed $\lambda$-calculus with pairing.

We say \emph{a term $t$ is typable with type $A$} and write $t : A$ if the
rules of $\systemB$ allow us to infer a sequent of the form $\Gamma \vdash t :
A$, where $\Gamma$ consists of the mappings $x \colon B$ for each $x^B \in \FV(t)$. If $t : A$, then $A$ is uniquely determined by $t$. If one presents $\systemB$
as a Curry-style type assignment system, then the Milner-Hindley principal type algorithm \cite{Hindley97}
extends easily to $\systemB$ allowing us to find a most general type assignment for a given term. 
From now on we will adopt the usual conversion of omitting type superscripts that can be inferred from the context. 

\subsection{Correspondence between $\LLm$ and $\systemB$}

In this section we show that types inhabited by closed terms in $\systemB$ are precisely the provable formulas of $\LLm$. To see that any formula provable in the Hilbert-style system $\LLm$ is (when viewed as a type) inhabited by a closed term of $\systemB$, the main work is showing that the axioms of $\LLm$ are inhabited:

\begin{Theorem} If a formula $A$ is provable in $\LLm$ then there exists a closed $\systemB$ term $t$ which is typable with type $A$.
\end{Theorem}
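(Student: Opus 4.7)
The plan is to proceed by induction on the length of the Hilbert-style derivation of $A$ in $\LLm$. The inductive step handles modus ponens: if sub-derivations of $A' \Lolly A$ and $A'$ have, by the induction hypothesis, closed $\systemB$ inhabitants $s$ and $t$, then the rule $\LE$ produces $s\,t$ as a closed inhabitant of $A$.

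All the content lies in the base cases, which ask for an explicit closed $\systemB$ term for each of the axiom schemata (B1)--(B5). The axioms (B1), (B2), (B3), (B5a) and (B5b) should be inhabitable entirely within the affine fragment, using only abstraction, application, pairing, and pair destruction via $\Let$; for instance one expects $\lambda f.\lambda g.\lambda x.\,g\,(f\,x)$ to inhabit (B1) and $\lambda p.\Let{\Pair{x}{y}}{p}{x}$ to inhabit (B2), the unused bound variable $y$ being permitted by the implicit weakening in $\ASM$. These cases are routine.

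The real obstacle is (B4), $A \iAnd (A \Lolly B) \Lolly B \iAnd (B \Lolly A)$. The naive inhabitant would have to use its extracted $A$-component twice --- once to feed the function and once to form a constant second projection --- and this is precisely what the affine discipline forbids. Here the $\BRK$ rule is essential. My proposal is the term
\[
 \lambda p.\, \Let{\Pair{a}{f}}{p}{\Break{}{f\,a}{\varphi, f'}{\varphi\,f'}},
\]
obtained by instantiating the free parameter ``$B$'' of the $\BRK$ rule to the target formula $B \iAnd (B \Lolly A)$ itself. With this choice $\varphi$ has type $(B \Lolly (B \iAnd (B \Lolly A))) \Lolly (B \iAnd (B \Lolly A))$ and $f'$ has type $B \Lolly (B \iAnd (B \Lolly A))$, so the single linear application $\varphi\,f'$ has exactly the target type and each bound variable is used exactly once. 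Hitting on this (or an equivalent) use of $\BRK$ is the one step in the whole argument that is not purely mechanical; once it is in place, all side conditions on contexts and linearity are immediate from $\ASM$, $\LI$, $\LE$, $\CE$ and $\BRK$.

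Combining the inhabitants of the axioms by $\LE$ as in the inductive step then delivers a closed $\systemB$ term of type $A$ for every $\LLm$-provable formula $A$.
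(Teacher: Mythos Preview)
Your overall plan --- induction on the Hilbert derivation, handling modus ponens by $\LE$ and exhibiting closed inhabitants of (B1)--(B5) --- is exactly the paper's approach, and your terms for the purely affine axioms are correct. The gap is your inhabitant of (B4).

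The term $\Break{}{f\,a}{\varphi,f'}{\varphi\,f'}$ does not type-check in $\systemB$. You have taken $S_B A$ to be $A \Lolly B$, following the displayed abbreviation, but that line contains a typo: the conversion $\breakConv$ substitutes $\lambda\_\,.\,t$ for $f$, which has type $B\Lolly A$ when $t:A$, and every worked example (the divisibility derivation, the axiom-L term, the proof of the converse theorem) treats $f:B\Lolly A$. With the intended reading $S_D X \equiv D\Lolly X$, breaking $f\,a:B$ at parameter $D=B\iAnd(B\Lolly A)$ gives $\varphi:(B\Lolly D)\Lolly D$ and $f':D\Lolly B$, so $\varphi\,f'$ is ill-typed. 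There is also an internal warning sign: under your reading the pattern $\Break{}{t}{\varphi,f'}{\varphi\,f'}$ would produce an inhabitant of \emph{any} chosen type from \emph{any} term $t$, so the rule you are invoking would make every type inhabited and the calculus trivial.

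The idea you are missing is that one must break the $A$-component \emph{before} feeding it to $f$. The paper breaks $a:A$ at parameter $B$, obtaining $\varphi:(A\Lolly B)\Lolly B$ and $g:B\Lolly A$, and then $\Pair{\varphi(f)}{g}:B\iAnd(B\Lolly A)$ with each bound variable used exactly once:
\[
\lambda v.\;\Let{\Pair{a}{f}}{v}{\bigl(\Break{}{a}{\varphi,g}{\Pair{\varphi(f)}{g}}\bigr)}.
\]
Once $f$ has consumed $a$ you have thrown away exactly the information needed for the $B\Lolly A$ component, so your choice of what to $\Brk$ cannot be salvaged by a different instantiation of the side parameter.
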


\Proof The special case of $\LE$ where the contexts $\Gamma$ and $\Delta$
are empty provides us with {\it modus ponens}. Hence it is enough to show that we have closed terms whose types are precisely the axioms (B1)--(B5). We indeed have: \\[1mm]
(B1) $\lambda f \lambda g \lambda x . g(f(x))
	: (A \Lolly B) \Lolly (B \Lolly C) \Lolly A \Lolly C$ \\[1mm]
(B2) $\lambda v . \Let{\Pair{x}{y}}{v}{x}
	: A \iAnd B \Lolly A$ \\[1mm]
(B3) $\lambda v . \Let{\Pair{x}{y}}{v}{\Pair{y}{x}}
	: A \iAnd B \Lolly B \iAnd A$ \\[1mm]
(B4) $\lambda v . \Let{\Pair{x}{f}}{v}{(\Break{}{x}{\varphi, g}{\Pair{\varphi(f)}{g}})}
	: A \iAnd (A \Lolly B) \Lolly B \iAnd (B \Lolly A)$ \\[1mm]
(B5a) $\lambda f \lambda x \lambda y . f(\Pair{x}{y})
	: (A \iAnd B \Lolly C) \Lolly A \Lolly B \Lolly C$ \\[1mm]
(B5b) $\lambda g \lambda a . \Let{\Pair{x}{y}}{a}{g(x)(y)}
	: (A \Lolly B \Lolly C) \Lolly A \iAnd B \Lolly C$ \Done\\
%

For the converse, to see that any type inhabited by a closed term in $\systemB$ is a formula that is provable in the Hilbert-style system $\LLm$
requires a little ingenuity: we show that the logical principle embodied in the rule $\BRK$ is derivable from the axiom (B4).

\begin{Theorem} If $\Gamma \vdash t : A$ in $\systemB$, where $\Gamma \equiv x_1 \colon B_1, \ldots, x_n \colon B_n$, then $B_1 \Lolly \ldots \Lolly B_n \Lolly A$ is provable in $\LLm$.
\end{Theorem}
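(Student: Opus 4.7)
The plan is to proceed by induction on the $\systemB$-derivation of $\Gamma \vdash t : A$. For each typing rule, the induction hypothesis supplies $\LLm$-derivations of the implicational formulas associated with the premises, and from these one assembles a derivation of $B_1 \Lolly \cdots \Lolly B_n \Lolly A$ for the conclusion. The particular enumeration of $\Gamma$ is immaterial, since permutation of $\Lolly$-antecedents is derivable using (B3) together with (B5a) and (B5b); similarly a derived weakening scheme $X \Lolly Y \Lolly Y$ follows from (B2) and (B5a).

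The cases for $\LI$, $\LE$, $\CI$, $\CE$ are routine: they are the Hilbert-style counterparts of $\Lolly$- and $\iAnd$-introduction and elimination, handled by (B5a) and (B5b) for currying/uncurrying, by (B1) for composition, and by modus ponens. The $\ASM$ case produces the identity $A \Lolly A$ together with the inserted weakenings for any other variables in $\Gamma$; the identity itself can be derived from the $\iAnd$-axioms via a short combinatorial argument (beginning from $A \iAnd B \Lolly B$, a consequence of (B2), (B3) and (B1), and using (B5a)).

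The crucial case is $\BRK$. After applying the induction hypothesis to the two premises and composing with (B1), the problem reduces to exhibiting a closed $\LLm$-derivation of
\[
A \Lolly (K_B A \Lolly S_B A \Lolly C) \Lolly C.
\]
This is precisely where axiom (B4) is essential. The strategy I would follow is to instantiate (B4) at the pair of types $A$ and $K_B A$, yielding $A \iAnd (A \Lolly K_B A) \Lolly K_B A \iAnd (K_B A \Lolly A)$. Since $A \Lolly K_B A$ unfolds to $A \Lolly (A \Lolly B) \Lolly B$, a derivable reformulation of modus ponens, feeding the hypothesis $A$ through this instance of (B4) produces a pair consisting of a $K_B A$-witness $\varphi$ and a ``retraction'' $r : K_B A \Lolly A$. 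The retraction, post-composed with the derivable weakening that embeds into $K_B A$, supplies the $S_B A$-witness required by the continuation; applying the continuation to $\varphi$ and this synthesised witness delivers the desired $C$.

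The principal obstacle is exactly this final derivation. One cannot hope to derive $A \Lolly (K_B A \iAnd S_B A)$ in $\LLm$, since that would amount to duplicating the resource $A$; (B4)'s restricted form of contraction is precisely what lets both witnesses be produced jointly from a single occurrence of $A$ and passed to the continuation, and this is the logical content that the $\BRK$ rule embodies. Once this closed formula is in hand, the inductive step closes and the theorem follows.
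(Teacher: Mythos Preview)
Your plan---induction on the typing derivation, with every rule except $\BRK$ handled by the standard equivalence between natural deduction for affine logic and the Hilbert system $\ALm$, and $\BRK$ discharged via axiom (B4)---is exactly the paper's approach. Your use of (B4) at the instance $A$, $K_B A$ to pass from $A$ to $K_B A \iAnd (K_B A \Lolly A)$, followed by composing the retraction $K_B A \Lolly A$ with the weakening $B \Lolly K_B A$, is also precisely the construction in the paper.

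The confusion is in your final paragraph. You assert that $A \Lolly (K_B A \iAnd S_B A)$ cannot be derived in $\LLm$, but your own argument one paragraph earlier \emph{is} a derivation of it: from a single occurrence of $A$ you have produced a $K_B A$-witness and an $S_B A$-witness, using (B4). That is exactly how the paper handles $\BRK$: it proves $A \Lolly K_B A \iAnd S_B A$ outright and is done. You may have been misled by the paper's displayed definition $S_B A \equiv A \Lolly B$, which is a slip: the conversion rule $\breakConv$ substitutes $\lambda\_.\,t$ (of type $B \Lolly A$) for $f$, and in every worked example the second component of a $\Brk$ has type $B \Lolly A$; the intended definition is $S_B A \equiv B \Lolly A$. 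With the slip read literally, $K_B A \iAnd S_B A$ is semantically just $B$, so $A \Lolly K_B A \iAnd S_B A$ fails and the theorem itself would be false. With the intended $S_B A = B \Lolly A$, the formula is a theorem of $\LLm$, your sketch is its proof, and there is no need for the indirection through the continuation that you describe as the ``principal obstacle''.
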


\Proof
If we erase the terms from the typing rules $\ASM$, $\LI$, $\LE$, $\CI$ and $\CE$, the resulting logical inference rules give a standard
natural deduction presentation of affine logic,
which is well known to be equivalent to the Hilbert-style system $\ALm$. So
we have only to show that the following rule, obtained by erasing the terms
from $\BRK$, is derivable in $\LLm = \ALm$ + (B4):
$$\begin{prooftree}
\Gamma \vdash A
\qquad
\Delta, K_B A, S_B A \vdash C
\justifies
\Gamma, \Delta \vdash C
\end{prooftree}
$$
That in turn, follows once we can show that $A \Lolly K_B A \iAnd S_B A$ is derivable in $\LLm$.
We start by noting that $\ALm$ proves
$\vdash A \Lolly (A \Lolly B) \Lolly B$
i.e., $\vdash A \Lolly K_B A$. Hence $\ALm$ proves
$\vdash A \Lolly (A \iAnd (A \Lolly K_B A))$.
But then, using axiom (B4) to transform $A \iAnd (A \Lolly K_B A)$, we have that $\LLm$ proves
$\vdash A \Lolly (K_B A\iAnd (K_B A \Lolly A))$.
As $\ALm$ also proves
$\vdash B \Lolly (A \Lolly B) \Lolly B$, i.e.,
$\vdash B \Lolly K_B A$, $\LLm$ proves
$\vdash A \Lolly K_B A \iAnd S_B A$.
I.e., in $\LLm$, $A$ is logically stronger than the conjunction
of $K_B A$ and $S_B A$, which justifies the inference given by
$\BRK$ \Done\\

In the above proof we have weakened $A \Lolly (K_B A\iAnd (K_B A \Lolly A))$ to $A \Lolly (K_B A\iAnd S_B A)$. Since $A$ is strictly stronger than $K_B A \iAnd S_B A$ in general, the inference rule given by $\BRK$ is not invertible. An alternative invertible version of the rule $\BRK$ is
$$\begin{prooftree}
\Gamma \vdash A
\qquad
\Delta, K_B A, K_B A \Lolly A \vdash C
\justifies
\Gamma, \Delta \vdash C
\end{prooftree}
$$
However, the simpler rule is adequate for present purposes.

%

\subsection{Conversion Rules}
\label{sec-conversions}

\begin{figure*}
\begin{align*}
	(\lambda x . t) \, s 
		& \reducby{\betaConv} t[s/x] & \\[2mm]
	\Let{\Pair{x}{y}}{\Pair{t}{u}}{s} 
		& \reducby{\letConv} s[t/x,u/y] & \\[2mm]
	\Break{}{t}{\varphi,f}{s} 
		& {} \reducby{\breakConv}  s[(\lambda p . p\,t) /
\varphi, (\lambda \_ . t) / f ]
\tag*{$
\left(\begin{array}{c@{}l}
            & \varphi \not\in \FV(s) \\
 {} \lor {} & f \not\in \FV(s) \\
 {} \lor {} & \FV(t)  = \emptyset \\
\end{array}\right)$}
\end{align*}
\caption{Standard conversions}
\label{fig-reductions-std}
\end{figure*}

\begin{figure*}
\begin{align*}
(\Let{\Pair{x}{y}}{t}{u})\,s & \reducby{\apLetConv}
\Let{\Pair{x}{y}}{t}{u \, s}  \\[2mm]
\Let{\Pair{v}{w}}{(\Let{\Pair{x}{y}}{t}{u})}{s} &  \reducby{\letLetConv}
\Let{\Pair{x}{y}}{t}{(\Let{\Pair{v}{w}}{u}{s})} \tag*{$(x, y \not\in
\FV(s))$}\\[2mm]
(\Break{}{t}{\varphi,f}{u})\,s & \reducby{\apBreakConv} \Break{}{t}{\varphi,f}{(u\,s)}  \\[2mm]
\Let{\Pair{x}{y}}{(\Break{}{t}{\varphi,f}{u})}{s} &
\reducby{\letBreakConv}
\Break{}{t}{\varphi,f}{(\Let{\Pair{x}{y}}{u}{s})} \tag*{$(\varphi, f
\not\in \FV(s))$}
\end{align*}
\caption{Permuting conversions}
\label{fig-reductions-perm}
\end{figure*}

We now equip our term language with type preserving conversion rules.
The conversion rules we propose are shown in Figures \ref{fig-reductions-std} and \ref{fig-reductions-perm}.
The first two standard conversions are the usual conversions for the affine simply-typed $\lambda$-calculus with pairing.
The third standard conversion shows how to reduce terms involving the new constructor $\Brk$.
The permuting conversions show how to move an occurrence of $\Brk$ or $\Lets$ up one level in the term structure.
We write $t \reduc t'$ when $t'$ can be obtained from $t$ by applying one of the conversion of Figures~\ref{fig-reductions-std} and \ref{fig-reductions-perm} to a single sub-term of $t$. We write $\reducstar$ for the reflexive-transitive closure of $\reduc$.


\section{Example Derivations}

\begin{figure*}
\[
\small
\begin{prooftree}
\[
	x : A \vdash x : A
	\quad
	\[
                	\[
                		\[
                			g : B \Lolly A \vdash g : B \Lolly A
                			\quad
                			\varphi : K_B A \vdash \varphi : K_B A
                			\justifies
                			g : B \Lolly A, \varphi : K_B A \vdash \varphi(g) : B
                			\using{\LE}
                		\]
                		f : B \Lolly A \vdash f : B \Lolly A
                		\justifies
                		g : B \Lolly A, \varphi : K_B A, f : B \Lolly A \vdash \varphi(g) \iAnd f : B \iAnd (B \Lolly A)
                		\using{\CI}
                	\]
		\justifies
		\varphi : K_B A, f : B \Lolly A \vdash \lambda g . \varphi(g) \iAnd f : (A \Lolly B) \Lolly B \iAnd (B \Lolly A)
		\using{\LI}
	\]
	\justifies
	x : A \vdash \Break{}{x}{\varphi,f}{\lambda g . \varphi(g) \iAnd f } : (A \Lolly B) \Lolly B \iAnd (B \Lolly A)
	\using{\BRK}
\]
\justifies
\vdash \lambda x . \Break{}{x}{\varphi,f}{\lambda g . \varphi(g) \iAnd f } : A \Lolly (A \Lolly B) \Lolly B \iAnd (B \Lolly A)
\using{\LI}
\end{prooftree}
\]
\caption{Sample derivation in $\systemB$}
\label{fig-examples}
\end{figure*}


Before investigating the theory of the conversions introduced above, we will first look at some examples of type derivations and some examples of term normalisation using the conversions. For the examples we will use some important $\LLm$ provable formulas from our work and the literature.

\subsection{The divisibility axiom}\label{subsec-divisibility-axiom}

Consider an example of a type whose corresponding formula is not provable in affine logic, but which is provable in $\LLm$, namely
\[ A \Lolly (A \Lolly B) \Lolly (B \iAnd (B \Lolly A)) \] 
This is essentially axiom (B4), and is normally referred to as the \emph{divisibility axiom}. It is well-known that the divisibility axiom characterises {\L}ukasiewicz logic over affine logic \cite{arthan-oliva14a}. 

We can build a term (which can be seen as a proof) having the above type
in the calculus $\systemB$ is as follows: Given $x \colon A$ we can
break it into $\varphi : A \Lolly B) \Lolly B$ and $f : B \Lolly A$.
Using these and $g : A \Lolly B$ we can build a term of type $B \iAnd (B
\Lolly A)$ as $\varphi(g) \iAnd f$. The full type derivation for the
inhabitant
\[ t \;\equiv\; \lambda x . \Break{}{x}{\varphi,f}{\lambda g . \varphi(g) \iAnd f } \]
of $A \Lolly (A \Lolly B) \Lolly B \iAnd (B \Lolly A)$
is shown in Figure \ref{fig-examples}. 

Clearly $t$ is in normal form, i.e., no conversion rules apply to it. However,
using $t$ we can construct the term, 
\[ u \;\equiv\; \lambda x' \lambda g' . \Let{m \iAnd n}{t(x')(g')}{m} \]
of type $A \Lolly (A \Lolly B) \Lolly B$, a type that is already
inhabitited in minimal affine logic. Hence, it is reasonable to ask
whether the conversions in $\systemB$ will reduce this term to a term
without the $\Brk$ constructor. This is indeed the case: First we have:
\begin{align*}
u   &\reducby{\betaConv} 
        \lambda x' \lambda g' . \Let{m \iAnd n}
            {(\Break{}{x'}{\varphi,f}{\varphi(g') \iAnd f }) }{m} \\
    & \reducby{\letBreakConv} 
            \lambda x' \lambda g' . \Break{}{x'}{\varphi, f}{(\Let{m \iAnd n}{\varphi(g') \iAnd f}{m})}\\
   & \reducby{\letConv}
            \lambda x' \lambda g' . \Break{}{x'}{\varphi, f}{\varphi(g')} \equiv v \mbox{, say.}
\end{align*}
%
%
%
Now, in the term $v$, the variable $f$ no longer appears free in the body of the $\Brk$ term, so the side-conditions of
$(\breakConv)$ hold and we may continue as follow to get the normal form for $u$, which does not involve $\Brk$.
\begin{align*}
v   & \reducby{\breakConv} \lambda x' \lambda g' . (\lambda p . p(x'))(g') \\
    & \reducby{\betaConv}  \lambda x' \lambda g' . g'(x').
\end{align*}

In this case we were able to reduce a term with a minimal affine type into a term without $\Brk$ sub-terms, but this is not possible in general. The new $\Brk$ constructor will give rise to new proofs of affine minimal logic theorems. For instance, we have the following normal form proof of identity $A \Lolly A$:
\[ \lambda x^A . \Break{}{x^A}{\varphi^{K_A A}, f^{S_A A}}{\varphi(f)}. \]
 Nevertheless, when this is applied to a closed term $s$ of type $A$, we are able to reduce 
$ (\lambda x . \Break{}{x}{\varphi, f}{\varphi(f)})(s) $
to $s$:
\begin{align*}
\Break{}{s}{\varphi, f}{\varphi(f)}
	& \reducby{\breakConv} (\lambda p . p(s))(\lambda \_ . s) \\[2mm]
	& \reducby{\betaConv} (\lambda \_ . s)(s) \\[2mm]
	& \reducby{\betaConv} s
\end{align*}
so that the new term also behaves like the identity function.

\subsection{Axiom L}

Consider another formula which is provable in basic logic (without using pre-linearity) but is not provable in affine logic, namely the axiom L of \cite{blok-ferreirim00}:
\[ ((B \Lolly A) \Lolly (A \Lolly B)) \Lolly (A \Lolly B) \]
Assuming $\Delta^{(B \Lolly A) \Lolly (A \Lolly B)}$ and $x^A$, we can break $x$ as $\varphi^{K_B A}$ and $f^{B \Lolly A}$ and construct a term of type $B$ as $\varphi(\Delta(f))$. Discharging the two assumptions, in our system we obtain:
\[ \vdash \lambda \Delta \lambda x . \Break{}{x}{\varphi,f}{\varphi(\Delta(f))} : ((B \Lolly A) \Lolly (A \Lolly B)) \Lolly (A \Lolly B) \]
If we take $A \equiv B$, and $\Delta(g) = g$, the term above reduces to
\[ \vdash \lambda x . \Break{}{x}{\varphi,f}{\varphi(f)} : A \Lolly A \]
which, as we have seen in the previous sub-section, is in normal form and behaves as the identity function on each closed term $t^A$.

\subsection{A homomorphism property}

Ferreirim \cite{Ferreirim92} proved an algebraic result (in the algebra of
hoops) suggesting that the following formula should be provable in $\LLm$:
\[ (A \Lolly A \iAnd A) \Lolly (A \Lolly B \iAnd C) \Lolly ((A \Lolly B) \iAnd (A \Lolly C)) \]
Her proof used model-theoretic methods and proved validity of the
formula for a restricted class of algebras.
This constitutes the main lemma in the proof that 
the mapping $X \; \mapsto \; A \Lolly X$ is a hoop homomorphism for idempotent elements $A$. With the
assistance of the Otter system \cite{McCune03} and Veroff's method of proof
sketches \cite{Veroff01}, Veroff and Spinks \cite{Veroff-Spinks04} found a
syntactic proof of the theorem in full generality.  An indirect proof
of the general result using algebraic methods is given in~\cite{arthan-oliva14b}.  Here we
present a term of $\systemB$ with the above type.

Assuming $\alpha \colon A \Lolly A \iAnd A$ and $h \colon A \Lolly B \iAnd C$ we build a term of type $(A \Lolly B) \iAnd (A \Lolly C)$. This term will be built using
\[ \varphi \colon K_{A \Lolly B} (A \Lolly B \iAnd C) \quad \; f \colon S_{A \Lolly B} (A \Lolly B \iAnd C) \]
which we will obtain by breaking $h \colon A \Lolly B \iAnd C$. 

We will define a series of terms $t_1[\varphi], t_2[x,f], \ldots, t_9[h,\alpha]$ where we have listed the free-variables of each term in the brackets. The final term $t_9[h,\alpha]$ will satisfy $$t_9[h,\alpha] \colon (A \Lolly B) \iAnd (A \Lolly C)$$ so that the term $\lambda \alpha \lambda h . t_9[h,\alpha]$ will witness the provability of our homomorphism property. 

Let $\pi_0 \colon B \otimes C \to B$ and $\pi_1 \colon (A \Lolly B) \otimes (A \Lolly C) \Lolly A \Lolly C$ be two closed terms of the indicated types (such terms are easy to define in $\systemB$). Now, we begin the definition of the $t_i$'s:
\begin{itemize}
	\item $t_1[\varphi] \equiv \varphi(\lambda m^{A \Lolly B \iAnd C} \lambda x^A . \pi_0(m \, x))$
	\item $t_2[x^A,f] \equiv \lambda j^{A \Lolly B} . \Let{x' \iAnd y'}{f j x}{(\lambda \_ . x') \iAnd (\lambda \_ . y')}$
\end{itemize}
so that
\begin{itemize}
	\item $t_1[\varphi] \; \colon A \Lolly B$
	\item $t_2[x^A, f] \; \colon (A \Lolly B) \Lolly ((A \Lolly B) \iAnd (A \Lolly C))$
\end{itemize}
Let $Y \equiv (A \Lolly B) \Lolly ((A \Lolly B) \iAnd (A \Lolly C))$. Using $t_2[x,f]$ we build
\[ t_3[x^A, f, p^{Y \Lolly (A \Lolly C)}] \equiv p(t_2[x,f]) \]
so $t_3[x^A, f, p^{Y \Lolly (A \Lolly C)}] \colon A \Lolly C$. Then, using $\alpha \colon A \Lolly A \iAnd A$, we can get a term of type $K_{A \Lolly C} Y$ as
\[ t_4[\alpha, f] \equiv \lambda p^{Y \Lolly (A \Lolly C)} \lambda y^A . \Let{y_0 \iAnd y_1}{\alpha y}{t_3[y_0, f, p](y_1)} \]
So, in summary, we have built two terms
\begin{itemize}
	\item $t_1[\varphi] \; \colon A \Lolly B$
	\item $t_4[\alpha, f] \colon (Y \Lolly (A \Lolly C)) \Lolly A \Lolly C$
\end{itemize}
Now we use $t_1[\varphi]$ to build the term
\[ t_5[\varphi] \equiv \lambda q^Y . q(t_1[\varphi]) \; \; \colon \; \; \underbrace{Y \Lolly (A \Lolly B) \iAnd (A \Lolly C)}_{\equiv Z} \]
We then break $t_5[\varphi] \colon Z$ into 
\[ \psi \colon K_{Y \Lolly (A \Lolly C)} Z \quad \quad g \colon S_{Y \Lolly (A \Lolly C)} Z \]
Defining $t_6 \colon Z \Lolly (Y \Lolly (A \Lolly C))$ as the closed term
\[ t_6 \equiv \lambda u^Z \lambda v^Y . \pi_1(u \, v) \]
we have that $\psi(t_6) \colon Y \Lolly (A \Lolly C)$. Finally, using $g \colon (Y \Lolly (A \Lolly C)) \Lolly Z$ and
\[ t_7 \equiv \lambda v^{A \Lolly C} \lambda u^{A \Lolly B} . u \iAnd v \quad \colon \; (A \Lolly C) \Lolly Y \]
we build
\[ t_8[g] \equiv \lambda i^{A \Lolly C} . \Break{Y}{i}{\eta, k}{g(k)(\eta(t_7))} \]
of type $(A \Lolly C) \Lolly ((A \Lolly B) \iAnd (A \Lolly C))$. The final term $t_9[h,\alpha] \colon (A \Lolly B) \iAnd (A \Lolly C)$ can then be built as
\[ t_9[h, \alpha] \equiv \Break{}{h}{\varphi,f}{(\Break{}{t_5[\varphi]}{\psi,g}{t_8[g](t_4[\alpha,f](\psi(t_6)))})} \]
%

\section{Properties of the Calculus}

In this section we prove three important properties of the calculus $\systemB$: the subject reduction property (conversions preserve types), strong normalisation and the Church-Rosser property.

\subsection{Subject reduction}

We now demonstrate that the $\systemB$ conversions proposed in Figures \ref{fig-reductions-std} and \ref{fig-reductions-perm} all preserve types.

\begin{Theorem} If $t \colon A$ and $t \reduc t'$ then $t' \colon A$.
\end{Theorem}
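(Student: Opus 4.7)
The plan is to proceed by induction on the derivation of $t \reduc t'$, which amounts to an induction on the position of the redex within $t$. If the redex lies in a proper sub-term, the typing rules of Figure~\ref{fig-type-system} are all structural in their term premises, so the induction hypothesis carries the preserved type through the enclosing constructor. The real work is the base case, where $t$ itself matches the left-hand side of one of the seven conversions of Figures~\ref{fig-reductions-std} and~\ref{fig-reductions-perm}.

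The key technical tool is a substitution lemma: if $\Gamma, x : A \vdash u : C$ and $\Delta \vdash s : A$ with $\Gamma, \Delta$ compatible, then $\Gamma, \Delta \vdash u[s/x] : C$. This is proved by a routine induction on the typing derivation of $u$, using affineness to see that for each binary typing rule, $x$ occurs free in exactly one of the two premises, which determines how $\Delta$ is routed. For the $\breakConv$ case I will also want a two-variable variant that replaces $\varphi$ and $f$ simultaneously; its proof is an immediate extension.

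With the substitution lemma in hand, the base cases $\betaConv$ and $\letConv$ follow at once by inverting the rules $\LI$, $\LE$ respectively $\CI$, $\CE$ at the root of $t$ and applying the lemma. For $\breakConv$, starting from $\Gamma \vdash t : A$ and $\Delta, \varphi : K_B A, f : S_B A \vdash s : C$, I check that $\lambda p . p\, t$ and $\lambda \_ . t$ can be typed at $K_B A$ and $S_B A$ respectively from $\Gamma$, and then appeal to the two-variable substitution lemma. The three disjunctive side conditions play their role here by preserving affineness: if $\varphi \notin \FV(s)$ or $f \notin \FV(s)$, only one of the two substituends is actually inserted, so $\Gamma$ is consumed at most once; if $\FV(t) = \emptyset$, then $\Gamma$ is empty and the two substituends are closed, so they may be inserted freely.

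For the permuting conversions of Figure~\ref{fig-reductions-perm}, no substitution is needed and the task reduces to reshaping the typing derivation. Each rule pulls a $\Lets$ or $\Brk$ outward, and its attached freshness side condition guarantees that the newly outer bound variables do not appear in the term moved past them, so the context splits demanded by the right-hand side remain well defined. I expect the $\breakConv$ base case to be the main obstacle, since the interplay between the three disjunctive side conditions and affine typing requires careful bookkeeping; once that is dispatched, the remaining cases are straightforward rearrangements of typing derivations and further applications of the substitution lemma.
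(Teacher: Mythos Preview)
Your proposal is correct and follows essentially the same approach as the paper's proof: case analysis on the conversion applied at the redex, with the congruence cases handled by the structural shape of the typing rules. The paper presents the argument as direct proof-tree transformations (exhibiting the reshaped derivation for $\breakConv$ and for one representative permuting conversion), whereas you factor the same work through an explicit substitution lemma; the content is the same, and your treatment of the three disjunctive side conditions for $\breakConv$ matches the paper's reasoning exactly.
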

\Proof It is sufficient to consider each of the conversions applied to the top level of the term $t$. This is clearly the case for the standard $\betaConv$ and $\letConv$. Assume we have a type derivation for $\Break{}{t}{\varphi,f}{s}$. Consider first the case when $t$ is closed, namely
\[
\begin{prooftree}
\[
	\using
	\pi_1
	\leadsto
	\vdash t \colon A
\]
\[
	\varphi \colon K_B A \vdash \varphi \colon K_B A
	\quad
	f \colon S_B A \vdash f \colon S_B A	
	\using
	\pi_2
	\leadsto
	\Delta, \varphi \colon K_B A, f \colon S_B A \vdash s \colon C
\]
\justifies
\Delta \vdash \Break{}{t}{\varphi,f}{s} \colon C
\using{\BRK}
\end{prooftree}
\]
with the two sub-derivations $\pi_1$ and $\pi_2$. The conversion $\breakConv$ in this case corresponds to the following proof transformation
\[
\begin{prooftree}
\[
        \[
            	\using
            	\pi_1
            	\leadsto
            	\vdash t \colon A
        \]
        \justifies
        \vdash \lambda p . p(t) \colon K_B A
\]
\[
        \[
            	\using
            	\pi_1
            	\leadsto
            	\vdash t \colon A
        \]
        \justifies
        \vdash \lambda \_ . t \colon S_B A
\]
\using
\pi_2
\leadsto
\Delta \vdash s[\lambda p . p(t)/\varphi, \lambda \_ . t / f] \colon C
\end{prooftree}
\]
If $t$ is not closed, but either $\varphi$ or $f$ is not free in $s$, the argument is similar but an extra context $\Gamma$ might be present in the derivation $\Gamma \vdash t \colon A$ but since this derivation only needs to be used once this does not invalidate the proof transformation.

Each of the permuting conversions needs to be checked as well, but this is an easy exercise, e.g. for $\apBreakConv$ we are transforming the derivation
\[
\begin{prooftree}
	\[
            \[
            	\using
            	\pi_1
            	\leadsto
            	\Gamma \vdash t \colon A
            \]
            \[
            	\using
            	\pi_2
            	\leadsto
            	\Delta, \varphi \colon K_B A, f \colon S_B A \vdash u \colon C \Lolly D
            \]
            \justifies
            \Gamma, \Delta \vdash \Break{}{t}{\varphi,f}{u} \colon C \Lolly D
         \]
         \[
         	\using 
		\pi_3
		\leadsto
	         \Theta \vdash s \colon C
         \]
         \justifies 
         \Gamma, \Delta, \Theta \vdash (\Break{}{t}{\varphi,f}{u})(s) \colon D
\end{prooftree}
\]
into 
\[
\begin{prooftree}
        \[
            	\using
            	\pi_1
            	\leadsto
            	\Gamma \vdash t \colon A
        \]
	\[
             \[
             	\using 
    		\pi_3
    		\leadsto
    	         \Theta \vdash s \colon C
             \]
            \[
            	\using
            	\pi_2
            	\leadsto
            	\Delta, \varphi \colon K_B A, f \colon S_B A \vdash u \colon C \Lolly D
            \]
            \justifies
            \Theta, \Delta, \varphi \colon K_B A, f \colon S_B A \vdash u \, s \colon D
         \]
         \justifies 
         \Gamma, \Theta, \Delta \vdash \Break{}{t}{\varphi,f}{u \, s} \colon D
\end{prooftree}
\]
by moving $\LE$ above the application of $\BRK$
\Done

\subsection{Strong normalisation}

Let us now prove that the system $\systemB$ is strongly normalising. 

\begin{Definition} Let us call a $\letConv$ or $\breakConv$ conversion in which the variables being substituted do not actually appear free in the term $s$ a \emph{silent} conversion. 
\end{Definition}

We first prove that the set of permuting conversions together with the silent $\letConv$ and $\breakConv$ conversions is strongly normalising:

\begin{Lemma} \label{SN-lemma1} There is no infinite sequence of terms $(t_i)_{i \in \NN}$ such that each $t_{i+1}$ is obtained from $t_i$ by means of a permuting conversion or a silent $\letConv$ or $\breakConv$ conversion.
\end{Lemma}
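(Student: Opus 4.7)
The plan is to exhibit a well-founded measure on $\systemB$-terms that strictly decreases under each rule in scope, and invoke well-founded induction. The measure will be the lexicographic pair $(|t|, \mu(t)) \in \NN \times \NN$, where $|t|$ is the number of nodes in the syntax tree of $t$ and $\mu(t)$ counts certain ``bad configurations'' of $t$, defined below. The lexicographic order on $\NN \times \NN$ is well-founded, so this will suffice.

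The silent $\letConv$ and $\breakConv$ conversions are the easy case: by the definition of ``silent'' their substitutions are vacuous, so the right-hand side is just $s$ and $|t|$ strictly shrinks. By contrast, each of the four permuting conversions is patently size-preserving; for instance, the left- and right-hand sides of the rule $\apLetConv$ each contain one application, one let, and intact copies of $t$, $u$, $s$, and the same balance holds for $\letLetConv$, $\apBreakConv$ and $\letBreakConv$. So for the permuting rules I need $\mu$ to strictly decrease.

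To define $\mu(t)$, I will call a parent--child edge of the syntax tree \emph{bad} when the child sits in the function position of an application, or in the source (first-argument) position of a let or a break; then let $\mu(t)$ count the pairs $(E, N)$ with $E$ a bad edge of $t$ and $N$ a let- or break-subterm of $t$ at or below the lower end of $E$. In each of the four permuting redexes the top of the redex contains exactly two bad edges, and the conversion replaces them by two new bad edges of strictly smaller ``reach''. A direct count in each case gives: the left-hand-side contribution from these two internal bad edges is $1 + 2L_t + L_u$, while the right-hand-side contribution is $L_t + L_u$, where $L_t$ and $L_u$ are the numbers of let/break subterms in $t$ and $u$ respectively. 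The net local decrease is $1 + L_t \geq 1$. Since the two binders of the redex are preserved across the conversion (they simply swap positions), the total let/break content of the redex subtree is unchanged, so contributions to $\mu$ from any bad edges outside the redex are unaffected.

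The main obstacle is carrying out this bookkeeping uniformly across $\apLetConv$, $\letLetConv$, $\apBreakConv$ and $\letBreakConv$; in each case the pattern is the same --- a bad edge above a let/break node whose source subterm is $t$ --- but the four rules must be inspected separately. Once the local count is verified, the conclusion is automatic: along any infinite sequence $(t_i)_{i \in \NN}$ of such reductions, each step strictly decreases $(|t_i|, \mu(t_i))$ in the well-founded lexicographic order on $\NN \times \NN$, a contradiction.
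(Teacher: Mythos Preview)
Your argument is correct. The size of the term is preserved by each permuting rule and strictly decreases under the silent rules, and your count $\mu(t)$ of pairs (bad edge, $\Lets/\Brk$ node in its lower subtree) does strictly drop under each of the four permuting rules: the local computation $1+2L_t+L_u$ versus $L_t+L_u$ checks out uniformly, and the invariance of the total $\Lets/\Brk$ population of the redex subtree guarantees that contributions from bad edges lying in the surrounding context are unaffected. Hence $(|t|,\mu(t))$ decreases lexicographically at every step, as you claim.

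Your route, however, is different from the paper's. The paper uses a lexicographic combination of three quantities: term size (for the silent rules), the \emph{type} complexity of the second argument of a $\Lets/\Brk$ (which drops under $\apLetConv$ and $\apBreakConv$, since the body passes from a function type $A\Lolly B$ to the result type $B$), and the size of the first argument (which drops under $\letLetConv$ and $\letBreakConv$). So the paper leans on typing information to handle the application-permuting rules, whereas your $\mu$ is purely syntactic. That buys you something: your proof works verbatim for untyped terms, and it makes the global aggregation explicit, where the paper merely gestures at ``the product of these three measures with a lexicographical ordering'' without spelling out how the per-node quantities are combined over the whole term. Conversely, the paper's measures are perhaps more immediately legible rule-by-rule, since each permuting conversion is paired with a single obvious local decrease rather than a counting argument over edges.
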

\Proof The silent conversions make the resulting term strictly smaller than the original one. In a $\Lets$ or a $\Brk$ term
\[ \Let{\Pair{x}{y}}{t}{s} \quad \quad \Break{}{t}{\varphi,f}{s} \]
let us call $t$ the \emph{first} argument, and $s$ the \emph{second} argument. The permuting conversions $\apLetConv$ and $\apBreakConv$ reduce the type complexity of the second argument, e.g. in 
\[ (\Let{\Pair{x}{y}}{t}{s})\,u \]
the term $s$ will have some type $A \Lolly B$, but after the $\apLetConv$ conversion we have
\[ \Let{\Pair{x}{y}}{t}{s \, u} \]
where the second argument $s \, u$ has type $B$. Finally, the permuting conversions $\letLetConv$ and $\letBreakConv$ reduce the size of the first argument for the $\Lets$ or $\Brk$ expressions. If we take the product of these three measures with a lexicographical ordering we obtain a measure which decreases (on a well-founded ordering) after each of these conversions. \Done

Our proof of strong normalisation will make use of the following translation of $\systemB$ terms into terms in the simply typed $\lambda$-calculus with pairing, which we will denote by $\LambdaP$.

\begin{Definition} Define a translation of $\systemB$ terms into $\LambdaP$ terms inductively as:
\[
\begin{array}{lcl}
(x)^* & = & x \\[2mm]
(\lambda x . t)^* & = & \lambda x . t^* \\[2mm]
(s \, t)^* & = & s^* \, t^* \\[2mm]
(\Let{\Pair{x}{y}}{s}{u})^* & = & u^*[\pi_0(s^*)/x][\pi_1(s^*)/y] \\[2mm]
(s \iAnd t)^* & = & s^* \iAnd t^* \\[2mm]
(\Break{}{s}{\varphi,f}{u})^* & = & u^*[k_0(s^*)/\varphi][k_1(s^*)/f]
\end{array}
\]
where $\pi_0, \pi_1$ are the $\LambdaP$ projections, and $k_0 = \lambda x \lambda p . p x$ and $k_1 = \lambda x \lambda \_ \, . x$. 
\end{Definition}

First, it is easy to prove by structural induction on the term $s$ that the translation $s \mapsto s^*$ commutes with substitution:

\begin{Lemma} $(s[t/x])^* = s^*[t^*/x]$
\end{Lemma}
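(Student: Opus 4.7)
The plan is to proceed by structural induction on $s$, using the standard $\alpha$-equivalence convention so that in cases involving binders we may assume the bound variables of $s$ are distinct from $x$ and do not occur free in $t$. Most cases are routine.

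For the base case $s = x$, both sides reduce to $t^*$. For $s = y \neq x$, both sides reduce to $y$. The cases $s = \lambda y . s'$, $s = s_1\,s_2$ and $s = \Pair{s_1}{s_2}$ follow by a direct application of the induction hypothesis, since the translation commutes with these constructs without introducing any extra substitutions.

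The substantive cases are the destructors $\Lets$ and $\Brk$, because the translation itself performs substitutions in these cases. For $s = \Let{\Pair{y}{z}}{s_1}{s_2}$ the induction hypothesis applied to $s_1$ and $s_2$ gives
\begin{align*}
(s[t/x])^* &= s_2^*[t^*/x]\bigl[\pi_0(s_1^*[t^*/x])/y\bigr]\bigl[\pi_1(s_1^*[t^*/x])/z\bigr], \\
s^*[t^*/x] &= s_2^*[\pi_0(s_1^*)/y][\pi_1(s_1^*)/z][t^*/x].
\end{align*}
These two expressions coincide by repeated application of the standard substitution lemma for $\LambdaP$, using $y, z \neq x$ and $y, z \notin \FV(t^*)$ (which follow from the bound-variable convention), together with the fact that $\pi_0$ and $\pi_1$ are closed $\LambdaP$ terms and therefore unaffected by $[t^*/x]$. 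The $\Brk$ case is entirely analogous, with the closed terms $k_0 = \lambda x \lambda p.\, p\,x$ and $k_1 = \lambda x \lambda \_ .\, x$ playing the roles of $\pi_0$ and $\pi_1$.

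No step here is genuinely difficult; the only thing to track carefully is that variable capture is avoided, which is handled by the $\alpha$-conversion convention, and that the constants $\pi_i$ (respectively $k_i$) can be freely pulled past the outer substitution $[t^*/x]$ so that the inner and outer substitutions may be commuted to match the two expressions above.
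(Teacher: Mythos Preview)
Your proof is correct and follows exactly the approach the paper indicates: the paper simply states that the lemma ``is easy to prove by structural induction on the term $s$'' and gives no further details. You have supplied those details accurately, including the only nontrivial cases ($\Lets$ and $\Brk$), where the substitution lemma for $\LambdaP$ and closedness of $\pi_i$, $k_i$ are exactly what is needed.
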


Using the lemma above we can state precisely how the translation of $\systemB$ terms maps to a translation of conversions:

\begin{Lemma} \label{SN-lemma2} We have that:
\begin{itemize}
	\item[(i)] If $t \reduc t'$ via a \emph{non-silent standard conversion} in $\systemB$ then $t^* \reduc^* (t')^*$ in one or more standard conversions in $\LambdaP$. 
	\item[(ii)] If $t \reduc t'$ via a \emph{silent standard conversion} or a \emph{permuting conversion} in $\systemB$ then $t^* = (t')^*$.
\end{itemize}
\end{Lemma}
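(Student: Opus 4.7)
The plan is to proceed by case analysis on which conversion rule is applied. Since the translation $(\cdot)^*$ is defined compositionally and $\LambdaP$ reductions are closed under contexts, it suffices to verify the statement when the conversion is applied at the top of $t$: a reduction of a proper sub-term lifts to reductions (respectively equalities) on the translations of the enclosing context. Throughout, the preceding substitution lemma plays the central role, letting me convert freely between $(s[t/x])^*$ and $s^*[t^*/x]$.

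For part (i), I treat the three non-silent standard conversions in turn. For $\betaConv$, $((\lambda x.t_0)\,s)^* = (\lambda x.t_0^*)\,s^*$ reduces in one $\beta$ step in $\LambdaP$ to $t_0^*[s^*/x] = (t_0[s/x])^*$ by the substitution lemma. For a non-silent $\letConv$, the translation of $\Let{\Pair{x}{y}}{\Pair{t_0}{u_0}}{s_0}$ is $s_0^*[\pi_0 \Pair{t_0^*}{u_0^*}/x][\pi_1 \Pair{t_0^*}{u_0^*}/y]$; since at least one of $x,y$ occurs free in $s_0^*$, at least one redex of the form $\pi_i \Pair{t_0^*}{u_0^*}$ is actually present after substitution, and the standard pairing conversions of $\LambdaP$ reduce these to $t_0^*$ and $u_0^*$, yielding $s_0^*[t_0^*/x][u_0^*/y] = (s_0[t_0/x, u_0/y])^*$ by the substitution lemma again. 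The non-silent $\breakConv$ case is entirely analogous: the terms $k_0(t_0^*)$ and $k_1(t_0^*)$ appearing in the translation $\beta$-reduce to $\lambda p.p\,t_0^*$ and $\lambda \_ . t_0^*$, and the non-silence hypothesis guarantees that at least one such redex is fired.

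For part (ii), the silent standard conversions are immediate: if $x,y \notin \FV(s_0)$ (respectively $\varphi,f \notin \FV(s_0)$) then the substitutions occurring in the translation of the left-hand side act as the identity on $s_0^*$, and likewise $s_0[\cdots] = s_0$ on the right, so both sides translate to $s_0^*$. For the four permuting conversions, equality of the translated terms follows by direct computation, using the substitution lemma together with the fact that the variables bound by the outer $\Let$ or $\Brk$ are, by the usual variable convention, fresh for the terms being permuted past. For example, under $\apLetConv$ the left side translates to $u^*[\pi_0 t^*/x][\pi_1 t^*/y]\,s^*$ while the right side translates to $(u^*\,s^*)[\pi_0 t^*/x][\pi_1 t^*/y]$; since $x,y \notin \FV(s^*)$, the substitutions distribute across the application and the two terms are syntactically identical. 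The cases $\apBreakConv$, $\letLetConv$, and $\letBreakConv$ are handled in exactly the same way, using freshness of the let/break-bound variables to commute the outer and inner substitutions.

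The main obstacle is purely bookkeeping: verifying that the variable-convention side conditions on the permuting conversions line up with the non-capture requirements for composing substitutions in $\LambdaP$, and confirming that every non-silent standard conversion leaves at least one $\pi$- or $\beta$-redex to be fired in the translation. Neither point is deep once the substitution lemma is taken as the organising principle.
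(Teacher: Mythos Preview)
The paper states this lemma without proof, so your proposal is supplying the missing argument. Your approach---case analysis on the conversion rule, reducing to the top-level case, and using the substitution lemma as the organising tool---is the natural one and is essentially what the authors must have had in mind.

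There is, however, one genuine gap in your reduction to the top-level case. You claim that ``a reduction of a proper sub-term lifts to reductions (respectively equalities) on the translations of the enclosing context,'' but the translations of $\Lets$ and $\Brk$ are not ordinary contexts: they perform substitutions that can \emph{discard} a subterm entirely. Concretely, if $x, y \notin \FV(u)$ then
\[
(\Let{\Pair{x}{y}}{s}{u})^* \;=\; u^*[\pi_0(s^*)/x][\pi_1(s^*)/y] \;=\; u^*
\]
independently of $s$. Hence a non-silent standard conversion taking place inside $s$ translates to \emph{zero} $\LambdaP$ steps, not ``one or more,'' so part~(i) fails literally in this corner case. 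The same issue arises for $\Brk$ when both $\varphi$ and $f$ are vacuous in the body.

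This is arguably an imprecision in the lemma as stated rather than a flaw in your method, and it does not threaten the strong-normalisation theorem that the lemma serves: an infinite reduction sequence confined to such a discarded position would yield an infinite reduction sequence on a structurally smaller $\systemB$ term, which one rules out by induction on term structure. Still, your lifting claim for part~(i) is not correct as written; you should either note this exception explicitly, or phrase the proof as an induction on the structure of the whole term (so that the discarded-subterm case is handled by the inductive hypothesis) rather than as a bare appeal to compositionality.
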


\begin{Theorem}\label{thm-sn-standard} $\systemB$ is strongly normalising.
\end{Theorem}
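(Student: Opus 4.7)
The plan is to derive a contradiction from the existence of an infinite $\systemB$ reduction sequence by combining Lemma~\ref{SN-lemma1} with the translation $(\cdot)^{*}$ into $\LambdaP$ and the well-known strong normalisation of the simply-typed $\lambda$-calculus with pairing.

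Suppose for contradiction that there is an infinite reduction sequence $t_0 \reduc t_1 \reduc t_2 \reduc \cdots$ in $\systemB$. I would classify each step according to its conversion rule: call a step of \emph{type (a)} if it is a $\betaConv$ step or a non-silent $\letConv$ or $\breakConv$ step, and of \emph{type (b)} if it is either a silent $\letConv$ or $\breakConv$ step or one of the four permuting conversions of Figure~\ref{fig-reductions-perm}. By Lemma~\ref{SN-lemma1}, the sub-relation generated by type (b) steps alone is strongly normalising, so no tail of the infinite sequence can consist entirely of type (b) steps; hence infinitely many of the $t_i \reduc t_{i+1}$ must be of type (a).

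Next, I apply $(\cdot)^{*}$ pointwise to obtain the sequence $t_0^{*}, t_1^{*}, t_2^{*}, \ldots$ of $\LambdaP$ terms. By Lemma~\ref{SN-lemma2}(ii), every type (b) step yields $t_i^{*} = t_{i+1}^{*}$; and by Lemma~\ref{SN-lemma2}(i), every type (a) step yields a non-empty $\LambdaP$ reduction $t_i^{*} \reducstar t_{i+1}^{*}$ that uses at least one standard conversion. Concatenating these, and discarding the equalities produced by type (b) steps, I obtain an infinite reduction sequence in $\LambdaP$, contradicting the well-known strong normalisation of the simply-typed $\lambda$-calculus with pairing.

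The only technical point that needs care is that the translation preserves well-typedness, so that strong normalisation of $\LambdaP$ may be invoked on the translated terms; this reduces to checking that the $\LambdaP$ combinators $\pi_0, \pi_1, k_0, k_1$ receive the types required to play the roles assigned to them in the translation clauses for $\Lets$ and $\Brk$, which is a routine calculation from the definitions of $K_B A$ and $S_B A$. Granted this, the argument above yields the required contradiction and the theorem follows.
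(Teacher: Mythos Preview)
Your proof is correct and follows essentially the same approach as the paper: both argue by contradiction, translate an infinite $\systemB$ sequence into $\LambdaP$ via $(\cdot)^{*}$, use Lemma~\ref{SN-lemma2} to see that type~(b) steps collapse while type~(a) steps make genuine progress, and then combine strong normalisation of $\LambdaP$ with Lemma~\ref{SN-lemma1} to derive the contradiction. The only cosmetic difference is the order of the final two observations (you first argue that infinitely many type~(a) steps must occur and then translate, whereas the paper translates first and then observes that the translated sequence must eventually stabilise, forcing a type~(b) tail); your added remark that $(\cdot)^{*}$ preserves typability is a point the paper leaves implicit but is indeed needed to invoke strong normalisation of $\LambdaP$.
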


\Proof Suppose that there was an infinite sequence $(t_i)_{i \in \NN}$ of $\systemB$ terms such that for each $i$ we have that $t_{i+1}$ is obtained from $t_i$ by one of the $\systemB$ conversions (standard or permuting). By Lemma \ref{SN-lemma2} we would then obtain a sequence of $\LambdaP$-terms $(t_i^*)_{i \in \NN}$ where for each $i$, either
\begin{itemize}
	\item $t^*_{i+1}$ is obtained from $t_i^*$ via one or more $\LambdaP$ conversions, or
	\item $t^*_{i+1} = t^*_i$
\end{itemize}
Since $\LambdaP$ is strongly normalising, we know that from some number $N$ and all $i \geq N$ we must have that $t^*_i = t^*_{i+1}$. But this would mean that in the original sequence, we have an infinite chain of permuting conversions or silent standard conversions, contradicting Lemma \ref{SN-lemma1}.
\Done

\subsection{Church-Rosser property}

\begin{Theorem}\label{thm-church-rosser}
$\systemB$ has the Church-Rosser property.
\end{Theorem}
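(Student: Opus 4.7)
The plan is to deduce the Church-Rosser property from strong normalisation (Theorem~\ref{thm-sn-standard}) via Newman's lemma, so that it suffices to establish local confluence: whenever $t \reduc t_1$ and $t \reduc t_2$, there exists some $u$ with $t_1 \reducstar u$ and $t_2 \reducstar u$. Once local confluence is in hand, Newman's lemma combined with Theorem~\ref{thm-sn-standard} immediately yields full confluence, hence the Church-Rosser property.

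To verify local confluence, I would proceed by the standard case analysis on the relative positions of the two reducts inside $t$. If the two redexes in $t$ are disjoint, i.e.\ one occurs in a subterm of $t$ that is disjoint from the other, then the two reductions clearly commute and we can close the diagram in one step on each side. If one redex is properly contained within the other but not at the root of a contracted pattern, we can still close the diagram by reducing the inner redex in the substituted result; here the substitution lemma (Lemma in the strong normalisation section, $(s[t/x])^* = s^*[t^*/x]$, plus its elementary syntactic analogue $s[t/x] \reducstar s[t'/x]$ when $t \reducstar t'$) handles all the book-keeping, with care taken for the $\breakConv$ case where the substitution duplicates $t$ into the two substituents $\lambda p . p\,t$ and $\lambda\_ . t$.

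The substantive work lies in the critical pairs where two redex patterns genuinely overlap at the same position. These arise from: (i) a $\betaConv$ or $\letConv$ or $\breakConv$ redex at the root together with a permuting redex that moves a $\Lets$ or $\Brk$ outward through the surrounding application or $\Lets$; and (ii) two permuting conversions applicable at nested positions (e.g.\ $(\Let{\Pair{x}{y}}{(\Let{\Pair{v}{w}}{t}{u})}{s})\,r$, which can be permuted in several orders). For each such overlap I would exhibit the common reduct explicitly, typically a canonical form in which all $\Lets$ and $\Brk$ constructors have been pushed outward and the root redex has then been contracted. The side condition on $\breakConv$ requires the most care: if a permuting conversion such as $\letBreakConv$ or $\apBreakConv$ is applied concurrently with a $\breakConv$ at the root, the side condition that one of $\varphi, f$ be non-free in $s$, or that $t$ be closed, is preserved after permutation, so the $\breakConv$ step remains available on the permuted form; this has to be checked separately for each of the three disjuncts of the side condition.

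The main obstacle I anticipate is precisely the bookkeeping around the $\breakConv$ rule, both because of its non-linear substitution (it duplicates $t$) and because of its disjunctive side condition. In particular, cases where $\breakConv$ interacts with $\letConv$ or with itself via the permuting conversions $\letBreakConv$ and $\apBreakConv$ will require us to verify that the required side condition still holds after reductions on the other side of the diagram, and that the two duplicated copies of $t$ can be reconciled via further reductions in the confluence diagram. Modulo this case analysis, the proof is a routine local confluence check, and combined with Newman's lemma and Theorem~\ref{thm-sn-standard} it delivers the Church-Rosser property.
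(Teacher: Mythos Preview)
Your approach is essentially the same as the paper's: strong normalisation plus Newman's lemma reduces the problem to local confluence, which is then verified by a critical-pair analysis after disposing of the disjoint and properly-nested cases via the substitution facts you mention. One small correction: there are in fact no critical pairs involving $\betaConv$, since none of the permuting conversions has a $\lambda$-abstraction in the relevant position; the paper's explicit list comprises seven overlap types drawn from $\letConv$, $\breakConv$, and the four permuting conversions, and your anticipated bookkeeping around the $\breakConv$ side condition is indeed the main point requiring care.
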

\Proof
Since we have strong normalisation for $\systemB$, by Newman's lemma, it is enough to prove the weak Church-Rosser property, i.e.,
that if $w \reduc w'$ and $w \reduc w''$ then there is
a $w'''$ such that $w'' \reducstar w'''$ and $w'' \reducstar w'''$.
The proof is by induction on the structure of $w$ and is fairly standard,
so we will only sketch it here.
See \cite[Theorem 6.3.9]{sorensen-urzyczyn06} for an example of this
kind of proof.
One checks that for all substitutions $\sigma$ and all terms $s$ and $s'$
\begin{itemize}
	\item if $s \reduc s'$ then $s[\sigma]\reduc  s'[\sigma]$, and
	\item for all total functions $f \subseteq \reducstar$ we have $s[\sigma] \reducstar s[\sigma']$, where $\sigma' = f \circ \sigma$.
\end{itemize}
These facts deal with the only tricky case in the proof
for the simply-typed $\lambda$-calculus, when $w$ is the
$\beta$-redex $(\lambda x . s)u$, $w' = s[u/x]$ and $w''$ is either $(\lambda x . s')u$
or $(\lambda x . s)u'$. The proof now reduces
to an analysis of the {\em critical pairs}:
i.e., the reducts $w'$ and $w''$  of a term $w$
in which two redexes overlap in such a way that carrying
out either conversion affects the structure required by the other conversion.
Inspection of the conversions shows that the are no critical pairs
involving $\betaConv$, but critical pairs do arise for the following
pairs of conversions.
$$
\begin{array}{|r@{\mbox{ v. }}l|}\hline
\letConv
   & \apLetConv \\\hline
\letConv
   & \letLetConv \\\hline
\breakConv
   &  \apBreakConv \\\hline
\breakConv
   & \letBreakConv \\\hline
\apLetConv
   & \letLetConv \\\hline
\apLetConv
   & \letBreakConv \\\hline
\letLetConv
   & \letLetConv \\\hline
\end{array}
$$
In the first four types of critical pair, the conversion to $w'$ (say)
eliminates both the redexes while $w''$ still has a redex of the
same type as was used to reduce $w$ to $w'$.
E.g., consider
the critical pair of the form $\letConv$ v. $\apLetConv$. If we set:
\begin{align*}
w &= (\Let{\Pair{x}{y}}{\Pair{t}{u}}{s})r \\
w' &= (s[t/x,u/y]) r \\
w'' &= \Let{\Pair{x}{y}}{\Pair{t}{u}}{(s\,r)}
\end{align*}
then $w \reducby{\letConv} w'$ and $w \reducby{\apLetConv} w''$.
So taking $w''' = w'$, we have that $w' \reducstar w'''$ (trivially)
and that $w'' \reducby{\letConv} w''$.

In the remaining three types of critical pair, both $w'$ and $w''$
require further conversion.
E.g., consider $\apLetConv$ v. $\letBreakConv$.
If we set:
\begin{align*}
w &= (\Let{\Pair{x}{y}}{(\Break{}{t}{\phi, f}{u})}{s})r \\
w' &= \Let{\Pair{x}{y}}{(\Break{}{t}{\phi, f}{u})}{(s\,r)} \\
w'' &= (\Break{}{t}{\phi, f}{(\Let{\Pair{x}{y}}{u}{s})) r}
\end{align*}
then $w \reducby{\apLetConv} w'$ and $w \reducby{\letBreakConv} w''$.
But then putting
$$w''' = \Break{}{t}{\phi, f}{(\Let{\Pair{x}{y}}{u}{(s\,r)})}$$
we have:
\begin{align*}
w' &\reducby{\letBreakConv}
      \Break{}{t}{\phi, f}{(\Let{\Pair{x}{y}}{u}{(s, r)})} = w'''\\
 &\quad\mbox{and} \\
w'' &\reducby{\apBreakConv}
      \Break{}{t}{\phi, f}{((\Let{\Pair{x}{y}}{u}{s})r)} \\
   &\reducby{\apLetConv}
      \Break{}{t}{\phi, f}{(\Let{\Pair{x}{y}}{u}{(s, r)})} = w'''\\
\end{align*}
The treatment of the other types of critical pair is similar.
\Done

Although the Church-Rosser property is not difficult to prove, it
was quite tricky to find a suitable system of conversions. One of
our earlier attempts included the following conversion
$$
\Break{}{(\Let{\Pair{x}{y}}{t}{u})}{\varphi,f}{s} 
\;\reducby{\breakLetConv}\; 
\Let{\Pair{x}{y}}{t}{(\Break{}{u}{\varphi, f}{s})}
$$
If we put:
\[
w = \Break{}{\Let{\Pair{x}{y}}{t}{u}}{\phi, f}{s}
\]
then we find (assuming $\FV(t) = \FV(u) = \emptyset$) that:
\begin{align*}
w & \reducby{\breakConv}
   s [\lambda p . p(\Let{\Pair{x}{y}}{t}{u})/\phi, \lambda \_ . \Let{\Pair{x}{y}}{t}{u}/f]\\
 &\quad\mbox{and} \\
w & \reducby{\breakLetConv}
      \Let{\Pair{x}{y}}{t}{(\Break{}{u}{\phi, f}{s})} \\
   &  \reducby{\breakConv}
      \Let{\Pair{x}{y}}{t}s[\lambda p . p\,u/\phi,\lambda \_ . u/f]
\end{align*}
So $w$ would have two distinct normal forms if we admitted $\breakLetConv$.

\section{A Gentzen-style Calculus and Cut Elimination}

We have so far discussed the dynamics of $\LLm$ natural deduction proofs via \emph{normalisation}, using our Curry-Howard correspondence. We can also look at these results using a Gentzen-style calculus with left and right rules, and look into \emph{cut elimination}. Such a system for $\LLm$ is given in Figure \ref{fig-sequent-calculus}.

\begin{Theorem} The provable sequents of the Gentzen system of Figure \ref{fig-sequent-calculus} are the same as those of $\LLm$.
\end{Theorem}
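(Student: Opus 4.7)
The plan is to establish the two inclusions separately, leveraging the Curry–Howard correspondence already developed for $\systemB$ so that we do not have to redo the Hilbert-level reasoning from scratch.

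For the forward direction ($\LLm$-theorems are Gentzen-provable), it suffices to exhibit sequent derivations of each of the axioms (B1)--(B5) and to simulate \emph{modus ponens}. Given sequent proofs of $\vdash A$ and $\vdash A \Lolly B$, an application of $\LLeft$ yields $A \Lolly B \vdash B$, and two cuts produce $\vdash B$. The axioms (B1), (B2), (B3), (B5a), (B5b) are affine and admit straightforward proofs using only $\LLeft$, $\LRight$, $\CLeft$, $\CRight$ together with identity and cut; these are essentially the sequent calculus analogues of the $\systemB$ terms we already gave for Theorem 1. The divisibility axiom (B4) is where the Łukasiewicz content really bites, and its proof must invoke whichever rule of the Gentzen system is designed to capture $\BRK$.

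For the reverse direction (Gentzen-provable sequents yield $\LLm$-theorems), I would interpret a sequent $A_1, \ldots, A_n \vdash B$ as the formula $A_1 \Lolly \cdots \Lolly A_n \Lolly B$ and argue by induction on the sequent derivation that this interpretation is $\LLm$-provable. The axiom and right rules are immediate; $\LLeft$ and $\CLeft$ reduce to transitivity of implication combined with currying/uncurrying via (B5a)/(B5b); cut is handled by composition. The specifically Łukasiewicz left rule reduces to (B4), essentially replaying the derivation of $A \Lolly K_B A \iAnd S_B A$ from the proof of Theorem 2. A cleaner alternative is to decorate Gentzen proofs with $\systemB$ terms directly, so that each sequent derivation yields a typed term of the appropriate type; Theorem 2 then delivers $\LLm$-provability for free.

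The main obstacle will be pinning down exactly how the Łukasiewicz-specific left rule of the Gentzen system interacts with (B4) and with $\BRK$. In the forward direction we need a concrete Gentzen proof of (B4) using that rule; in the reverse direction we need to check that the rule's formula translation is derivable in $\LLm$, and this is where the indirect detour through $A \iAnd (A \Lolly K_B A) \Lolly K_B A \iAnd (K_B A \Lolly A)$ employed in Theorem 2 must be replayed. Once the treatment of this single rule is settled, the remaining cases are routine inductions on derivation structure.
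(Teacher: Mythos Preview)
Your proposal is correct, but it works harder than the paper does. The paper's proof is a two-line observation: the rule $\BRK$ in the Gentzen system is \emph{literally the same} as the rule $\BRK$ in the natural deduction system $\systemB$, and the remaining rules ($\LLeft/\LRight$, $\CLeft/\CRight$, $\ASM$, $\CUT$) are inter-derivable with the natural deduction rules ($\LI/\LE$, $\CI/\CE$, $\ASM$) by the standard sequent-calculus/natural-deduction translation. Hence the Gentzen system proves exactly the same sequents as $\systemB$, and Theorems~1 and~2 already give $\systemB \equiv \LLm$.

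Your primary route instead goes directly between the Gentzen system and the Hilbert system $\LLm$: deriving (B1)--(B5) and simulating \emph{modus ponens} for one direction, and inducting on sequent derivations (or decorating them with $\systemB$ terms) for the other. This is perfectly sound, and your identification of (B4) and the $\BRK$ rule as the only non-routine case is exactly right. But it partially re-proves Theorems~1 and~2 in a different guise. The paper's approach buys economy by using $\systemB$ as a pivot for \emph{both} directions; your ``cleaner alternative'' for the reverse direction is in fact the paper's whole argument, and the same idea works forwards as well, so you could have stopped there.
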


\begin{proof} The break rule of Figure \ref{fig-sequent-calculus} matches precisely the break rule of our natural deduction system $\systemB$, which we have already shown to coincide with $\LLm$. It is standard to show that the left and right rules are inter-derivable with the introduction and elmination rules of the natural deduction system.
\end{proof}

\begin{figure*}
\[
\begin{array}{ccc}
%
%
\multicolumn{3}{c}{
\begin{prooftree}
\justifies
\Gamma, A \vdash A
\using{\ASM}
\end{prooftree}
}
\\[7mm]
%
%
\begin{prooftree}
\Gamma \vdash  A
\qquad
\Delta, A \vdash C
\justifies
\Gamma, \Delta \vdash C
\using{\CUT}
\end{prooftree}
& \quad &
\begin{prooftree}
\Gamma \vdash  A
\qquad
\Delta, K_B A, S_B A \vdash C
\justifies
\Gamma, \Delta \vdash C
\using{\BRK}
\end{prooftree}
\\[7mm]
%
%
\begin{prooftree}
\Gamma, A \vdash B
\justifies
\Gamma \vdash A \Lolly B
\using{\LRight}
\end{prooftree}
& &
%
%
\begin{prooftree}
\Gamma \vdash A
\quad
\quad
\Delta, B \vdash  C
\quad
\justifies
\Gamma, \Delta, A \Lolly B \vdash C
\using{\LLeft}
\end{prooftree} \\[7mm]
%
%
\begin{prooftree}
\Gamma \vdash A
\quad
\Delta \vdash B
\justifies
\Gamma, \Delta \vdash A \iAnd B
\using{\CRight}
\end{prooftree}
& &
%
%
\begin{prooftree}
\Gamma, A, B \vdash u : C
\justifies
\Gamma, A \iAnd B \vdash C
\using{\CLeft}
\end{prooftree} 
\end{array}
\]
\caption{Sequent calculus for $\LLm$}
\label{fig-sequent-calculus}
\end{figure*}

\begin{Theorem} The cut rule \CUT{} is eliminable from the proof system of Figure \ref{fig-sequent-calculus}.
\end{Theorem}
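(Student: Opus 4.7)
The plan is to adapt Gentzen's standard cut-elimination argument to the additional $\BRK$ rule. We proceed by a double induction whose outer measure is the rank of the cut formula (the size of $A$), and whose inner measure is the sum of the heights of the two subderivations immediately above the cut. It suffices to show that a topmost $\CUT$ --- one whose two premises are themselves cut-free --- can be replaced by a derivation containing only cuts of strictly smaller rank, or of the same rank but strictly smaller combined height. Iterating this transformation from the top downward eliminates all cuts.

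The case analysis splits according to the last rules used in the two premises of the cut. In the standard \emph{principal} cases, the cut formula is introduced by the last rule on both sides: $A \Lolly B$ by $\LRight$ on the left and $\LLeft$ on the right, or $A \iAnd B$ by $\CRight$ and $\CLeft$. These unfold in the usual way into cuts on the immediate subformulas $A$ and $B$, dropping the rank. The \emph{axiom} cases, in which at least one premise is an instance of $\ASM$ whose principal formula is the cut formula, eliminate the cut directly. All remaining combinations are handled by \emph{commuting} the cut upward past a rule in which the cut formula is not principal, preserving the rank but reducing the height sum.

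The only case specific to $\LLm$ is the commutation of $\CUT$ with $\BRK$. The key observation is that $\BRK$ introduces no compound formula in the right-hand side of its conclusion, and its conclusion context is just the merged union of the contexts of its two premises: the auxiliary formulas $K_B A$ and $S_B A$ live only in the premise, not in the conclusion. Hence the cut formula can never be principal in $\BRK$, and both directions reduce to pure commutation. If the left premise of the cut ends in $\BRK$ with subpremises $\Gamma_1 \vdash A'$ and $\Gamma_2, K_B A', S_B A' \vdash A$, we first cut the right premise $\Delta, A \vdash C$ against $\Gamma_2, K_B A', S_B A' \vdash A$ to obtain $\Gamma_2, \Delta, K_B A', S_B A' \vdash C$, and then re-apply $\BRK$ with $\Gamma_1 \vdash A'$ to conclude $\Gamma_1, \Gamma_2, \Delta \vdash C$. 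If instead the right premise of the cut ends in $\BRK$, we locate the cut formula $A$ in either the first-premise context or the second-premise context of that $\BRK$, and commute the cut into the corresponding subpremise analogously. In each case the resulting cut has the same cut formula but strictly smaller height sum, so the inner induction hypothesis applies.

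The main obstacle is really bookkeeping: checking that the implicit compatibility side-condition on the contexts of two-premise rules is preserved by each permutation, and that the ordered measure strictly decreases at every step. Once the $\BRK$ commutations above are verified, the rest of the argument is a routine replay of Gentzen's cut-elimination proof for the $(\Lolly, \iAnd)$-fragment of intuitionistic/affine sequent calculus, with no new subtleties arising from the presence of $\BRK$.
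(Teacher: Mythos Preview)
Your proposal is correct and follows essentially the same strategy as the paper: the standard Gentzen double induction on cut rank and derivation height, with principal cases reducing rank, axiom cases eliminating the cut, and all remaining cases---including both orientations of $\BRK$---handled by commuting the cut upward, which works precisely because the cut formula is never principal in $\BRK$. The paper's own proof is merely a terse summary of this argument, whereas you have spelled out the $\BRK$ commutations explicitly; there is no substantive difference in approach.
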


\begin{proof} For each left and right rule, let us call the formula which is being introduced the \emph{major} formula. When the application of cut involves two major formulas, then such cut can be replaced by cuts of smaller complexity, as in the standard cut elimination procedure. In all other cases, when the cut formula $A$ is not major, the cut can be pushed up the proof tree. This can also be done with the new break rule \BRK{}. When one of the premises of the cut rule is an axiom the cut rule can be eliminated.
\end{proof}

The reader might have noticed, however, that \BRK{} has a very similar flavour to \CUT{}. But it follows directly from the theorem above that the rule \BRK{} is not derivable from \CUT{}, since \CUT{} is eliminable but \BRK{} is not. 
%
%
There are, however, some particular instance of \BRK{} which are indeed derivable from \CUT{}.

\begin{Theorem} When $\Gamma = \emptyset$ then \BRK{} is derivable from \CUT{}.
\end{Theorem}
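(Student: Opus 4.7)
The plan is to use the closedness of the first premise $\vdash A$ to construct two independent derivations — one of $\vdash K_B A$ and one of $\vdash S_B A$ — and then apply $\CUT$ twice to eliminate $K_B A$ and $S_B A$ from the second premise $\Delta, K_B A, S_B A \vdash C$, giving the desired $\Delta \vdash C$.

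To obtain $\vdash K_B A$, i.e.\ $\vdash (A \Lolly B) \Lolly B$, I would take the axiom $B \vdash B$, apply $\LLeft$ together with the first premise $\vdash A$ to get $A \Lolly B \vdash B$, and close with $\LRight$. To obtain $\vdash S_B A$ — which, in order for the $\Brk$ reduction substitute $\lambda \_ . t$ to type-check and agree with the term given for axiom (B4), must be $\vdash B \Lolly A$ — I would take the instance $B, A \vdash A$ of $\ASM$ (implicit weakening in the axiom scheme), apply $\LRight$ to get $A \vdash B \Lolly A$, and then $\CUT$ against the first premise $\vdash A$. Two final applications of $\CUT$ against the second premise, eliminating first $K_B A$ and then $S_B A$, complete the derivation of $\Delta \vdash C$.

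The only subtle point, and the one I expect to warrant comment, is that the first premise $\vdash A$ is used twice in the combined proof tree: once in the subderivation of $\vdash K_B A$ and once in the subderivation of $\vdash S_B A$. This is legitimate precisely because $\vdash A$ is closed, so the two occurrences appear as independent subtrees that share no open hypotheses to be contracted. This is exactly why the hypothesis $\Gamma = \emptyset$ is essential: for non-empty $\Gamma$ the same construction would demand contraction on the formulas of $\Gamma$, which is not available in the affine base system, and the argument collapses — consistently with the preceding observation that $\BRK$ is not in general derivable from $\CUT$.
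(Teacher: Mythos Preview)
Your argument is correct and matches the paper's own proof: from the closed premise $\vdash A$ one derives both $\vdash K_B A$ and $\vdash S_B A$ (the paper likewise reads $S_B A$ as $B \Lolly A$ here), and two applications of $\CUT$ against $\Delta, K_B A, S_B A \vdash C$ yield $\Delta \vdash C$. Your observation that the duplication of $\vdash A$ is precisely what the hypothesis $\Gamma = \emptyset$ licenses is exactly the intended point.
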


\begin{proof} We can derive \BRK{} as follows:
\[
\begin{prooftree}
\[
	\vdash  A
	\Justifies
	\vdash B \Lolly A
\]
\[
    \[
    	  \vdash  A
    	  \Justifies
    	  \vdash K_B A
    \]
    \qquad
    \Delta, K_B A, S_B A \vdash C
    \justifies
    \Delta, B \Lolly A \vdash C
    \using{\CUT}
\]
\justifies
\Delta \vdash C
\using{\CUT}
\end{prooftree}
\]
where the double lines indicate one or more steps.
\end{proof}

\begin{Theorem} When $K_B A$ or $S_B A$ is a superfluous assumption in proving $\Delta, K_B A, S_B A \vdash C$ then \BRK{} is derivable from \CUT{}.
\end{Theorem}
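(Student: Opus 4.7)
The plan is to split on the disjunction in the hypothesis and, in each case, replace the desired instance of \BRK{} by a composition of two \CUT{}s, exactly as in the proof of the preceding theorem but using only one of the two auxiliary derivations on $A$ rather than both.

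First I would treat the case in which $S_B A$ is superfluous, so by hypothesis we actually have a derivation of the smaller sequent $\Delta, K_B A \vdash C$. The key ingredient, already used implicitly in the proof of Theorem 2, is that $A \vdash K_B A$ is derivable in $\ALm$: from $A \vdash A$ and $B \vdash B$, one application of $\LLeft$ yields $A, A \Lolly B \vdash B$, and then $\LRight$ gives $A \vdash (A \Lolly B) \Lolly B$, i.e.\ $A \vdash K_B A$. Composing this with the assumed $\Gamma \vdash A$ by \CUT{} produces $\Gamma \vdash K_B A$, and a second \CUT{} against $\Delta, K_B A \vdash C$ yields $\Gamma, \Delta \vdash C$, which is exactly the conclusion of \BRK.

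The other case, in which $K_B A$ is superfluous, is symmetric and in fact simpler: we have a derivation of $\Delta, S_B A \vdash C$ and need to derive $\Gamma \vdash S_B A$ from $\Gamma \vdash A$. Since $A, B \vdash A$ is an instance of \ASM{} (with $B$ absorbed into the side context), one application of $\LRight$ gives $A \vdash B \Lolly A$, that is $A \vdash S_B A$; a \CUT{} against $\Gamma \vdash A$ then yields $\Gamma \vdash S_B A$, and a further \CUT{} against $\Delta, S_B A \vdash C$ delivers $\Gamma, \Delta \vdash C$.

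The only delicate point, and really the main obstacle to a fully formal proof, is to pin down the precise reading of \emph{superfluous}: the natural interpretation is that there exists a derivation of $\Delta, K_B A, S_B A \vdash C$ in which the named assumption is never active, equivalently (by admissibility of weakening, which holds because \ASM{} already allows an arbitrary side context) a derivation of the corresponding smaller sequent with that assumption deleted. Once this equivalence is made explicit, no induction on derivations is needed and the argument reduces to the two \CUT{}-compositions described above.
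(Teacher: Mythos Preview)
Your proposal is correct and follows essentially the same route as the paper: derive the one needed formula ($K_B A$ or $S_B A$) from $A$ and then \CUT{} into the reduced right premise. The paper only spells out the case where $K_B A$ is superfluous, packaging your first \CUT{} into a double-line step from $A$ to $B \Lolly A$ and then applying a single \CUT{}; your version treats both cases explicitly and is in fact more careful about carrying the context $\Gamma$ through.
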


\begin{proof} Consider the case when $K_B A$ is not needed, so that we actually have $\Delta, S_B A \vdash C$. Such instance of \BRK{} can be derived from \CUT{} as: 
\[
\begin{prooftree}
\[
	  \vdash  A
	  \Justifies
	  \vdash B \Lolly A
\]
\qquad
\Delta, B \Lolly A \vdash C
\justifies
\Delta \vdash C
\using{\CUT}
\end{prooftree}
\]
The case where $S_B A$ is superfluous can be treated in the same way.
\end{proof}

These last two theorems justify our side conditions for the conversion rule (\breakConv) from Section \ref{sec-conversions}. The context $\Gamma$ being empty corresponds to the term $t$ being closed ($\FV(t) = \emptyset$), whereas $K_B A$ or $B \Lolly A$ begin superfluous assumptions correspond to $\varphi \not\in \FV(s)$ or $f \not\in \FV(s)$. From a Gentzen-style point of view, these are the cases where we can always replace a break rule by a standard cut rule.

\section{Concluding Remarks}

Strong normalisation for the standard simply-typed $\lambda$-calculus with pairing is well-known.
Strong normalisation for the affine fragment of that calculus follows or can be proved more directly when one observes that the conversions always reduce the size of an affine term.
Troelstra \cite{Troelstra95} proved strong normalisation for a variant of the term calculus for intuitionistic linear logic proposed by Benton et al. \cite{Benton(1993)}.
In that calculus it is the exponential operator $!$ that makes the normalisation result tricky, since the usual
introduction rule for $!$ also acts as an elimination rule.
Similarly, in $\systemB$, the rule for $\Brk$ complicates the normalisation proof.
In both cases, the desire to control contraction is the source of the difficulty.

The decision problem for classical {\L}ukasiewicz logic is known to be co-NP-complete while the
decision problem for minimal {\L}ukasiewicz logic can be shown to reduce to the decision
problem for the equational theory of commutative GBL-algebras, which is known to be PSPACE-complete \cite{Bova:2009}.
In both cases, the known decision procedures are based on semantic methods and no effective
 proof search methods are known. The present work is motivated by a desire either to find
such algorithms or to understand why they cannot exist.
It seems highly unlikely that
a logic with a PSPACE-complete decision problem could admit an analytic inference system.
However from the strong normalisation property, one can hope to derive effective bounds on the size of a deduction and the formulas in it and so, perhaps, find some weak form of the sub-formula property that could enable a proof-theoretic decision procedure.


\bibliographystyle{plain}

\bibliography{../references}

\begin{thebibliography}{10}

\bibitem{arthan-oliva14a}
Rob Arthan and Paulo Oliva.
\newblock On affine logic and {{\L}}ukasiewicz logic.
\newblock {\url{http://arXiv.org/abs/1404.0570}}, 2014.

\bibitem{arthan-oliva14b}
Rob Arthan and Paulo Oliva.
\newblock On pocrims and hoops.
\newblock {\url{http://arXiv.org/abs/1404.0816}}, 2014.

\bibitem{arthan-oliva15a}
Rob Arthan and Paulo Oliva.
\newblock Negative translations for affine and {{\L}}ukasiewicz logic.
\newblock Submitted, 2015.

\bibitem{Benton(1993)}
P.~N. Benton, G.~M. Bierman, and V.~C.~V. de~Paiva.
\newblock A term calculus for intuionistic linear logic.
\newblock In M.~Bezem and J.~F. Groote, editors, {\em Proceedings of Conference
  on Typed Lambda Calculi and Applications}, volume 664 of {\em Lecture Notes
  in Computer Science}, pages 75--90. Springer, 1993.

\bibitem{blok-ferreirim00}
W.~J. Blok and I.~M.~A. Ferreirim.
\newblock On the structure of hoops.
\newblock {\em Algebra Universalis}, 43(2-3):233--257, 2000.

\bibitem{Bova:2009}
Simone Bova and Franco Montagna.
\newblock The consequence relation in the logic of commutative {GBL}-algebras
  is {PSPACE}-complete.
\newblock {\em Theor. Comput. Sci.}, 410(12-13):1143--1158, March 2009.

\bibitem{BO}
J.~R. B{\"u}chi and T.~M. Owens.
\newblock Complemented monoids and hoops.
\newblock Unpublished manuscript, c. 1974.

\bibitem{Ferreirim92}
Isabel M.~A. Ferreirim.
\newblock {\em On Varieties and Quasivarieties of Hoops and their Reducts}.
\newblock Ph. {D.} thesis, University of Illinois at Chicago, 1992.

\bibitem{Hajek98}
Petr H{\'a}jek.
\newblock {\em Metamathematics of Fuzzy Logic}.
\newblock Kluwer Academic Publishers, 1998.

\bibitem{Hindley97}
J.~Roger Hindley.
\newblock {\em Basic Simple Type Theory}, volume~42 of {\em Cambridge Tracks in
  Theoretical Computer Science}.
\newblock Cambridge University Press, 1997.

\bibitem{jipsen-montagna06}
P.~{Jipsen} and F.~{Montagna}.
\newblock {On the structure of generalized BL-algebras.}
\newblock {\em {Algebra Univers.}}, 55(2-3):227--238, 2006.

\bibitem{McCune03}
W.~McCune.
\newblock {OTTER} 3.3 {R}eference {M}anual.
\newblock Technical Report 263, Argonne National Laboratory, Argonne, IL, 2003.

\bibitem{Metcalfe(2005A)}
George Metcalfe, Nicola Olivetti, and Dov Gabbay.
\newblock Sequent and hypersequent calculi for abelian and {{\L}}ukasiewicz
  logics.
\newblock {\em ACM Transactions on Computational Logic (TOCL)}, 6(3):578--613,
  2005.

\bibitem{sorensen-urzyczyn06}
Morten~Heine {S{\o}rensen} and Pawe{\l} Urzyczyn.
\newblock {\em {Lectures on the Curry-Howard isomorphism}}, volume 149 of {\em
  {Studies in Logic and the Foundations of Mathematics}}.
\newblock Elsevier, 2006.

\bibitem{Troelstra95}
A.~S. Troelstra.
\newblock Natural deduction for intuitionistic linear logic.
\newblock {\em Ann. Pure Appl. Logic}, 73(1):79--108, 1995.

\bibitem{Troelstra(96)}
A.~S. Troelstra and H.~Schwichtenberg.
\newblock {\em Basic Proof Theory}.
\newblock Cambridge University Press, Cambridge (2nd edition), 2000.

\bibitem{Veroff-Spinks04}
R.~Veroff and M.~Spinks.
\newblock On a homomorphism property of hoops.
\newblock {\em Bulletin of the Section of Logic}, 33(3):135--142, 2004.

\bibitem{Veroff01}
Robert Veroff.
\newblock Solving open questions and other challenge problems using proof
  sketches.
\newblock {\em J. Autom. Reasoning}, 27(2):157--174, 2001.

\end{thebibliography}

%

\end{document}